\newtheorem{theorem}{Theorem}
\newtheorem{definition}{Definition}
\begin{document}
\textwidth 6.5in \textheight 9.1in \topmargin  -  0.1in

\title{\Large{Coalition Games with Cooperative Transmission:
A Cure for the Curse of Boundary Nodes in Selfish
Packet-Forwarding Wireless Networks}}
\author{Zhu Han$^*$ and H. Vincent Poor$^+$\\
$^*$Department of Electrical and Computer Engineering, \\Boise
State
University, Idaho, USA\\
$^+$Department of Electrical Engineering, \\ Princeton University,
New Jersey, USA  \thanks{This research was supported by the
National Science Foundation under Grants ANI-03-38807 and
CNS- 06-25637.}}

\maketitle\thispagestyle{empty}

\begin{abstract}

In wireless packet-forwarding networks with selfish nodes,
application of a repeated game can induce the nodes to forward
each others' packets, so that the network performance can be
improved. However, the nodes on the boundary of such networks
cannot benefit from this strategy, as the other nodes do not
depend on them. This problem is sometimes known as {\em the curse
of the boundary nodes}. To overcome this problem, an approach
based on coalition games is proposed, in which the boundary nodes
can use cooperative transmission to help the backbone nodes in the
middle of the network. In return, the backbone nodes are willing
to forward the boundary nodes' packets. Here, the concept of core
is used to study the stability of the coalitions in such games.
Then three types of fairness are investigated, namely, min-max
fairness using nucleolus, average fairness using the Shapley
function, and a newly proposed market fairness. Based on the
specific problem addressed in this paper, market fairness is a new
fairness concept involving fairness between multiple backbone
nodes and multiple boundary nodes. Finally, a protocol is designed
using both repeated games and coalition games. Simulation results
show how boundary nodes and backbone nodes form coalitions
according to different fairness criteria. The proposed protocol
can improve the network connectivity by about 50\%, compared with
pure repeated game schemes.

\end{abstract}

\newpage \setcounter{page}{1}
\section{Introduction}\label{sec:intro}\setlength{\baselineskip}{25pt}

In wireless networks with selfish nodes such as ad hoc networks,
the nodes may not be willing to fully cooperate to accomplish the
overall network goals. Specifically  for the packet-forwarding
problem, forwarding of other nodes' packets consumes a node's
limited battery energy. Therefore, it may not be in a node's best
interest to forward other's arriving packets. However, refusal to
forward other's packets non-cooperatively will severely affect the
network functionality and thereby impair a node's own performance.
Hence, it is necessary to design a mechanism to enforce
cooperation for packet forwarding among greedy and distributed
nodes.

The packet-forwarding problem in ad hoc networks has been
extensively studied in the literature. The fact that nodes act
selfishly to optimize their own performance has motivated many
researchers to apply game theory \cite{Game_theory1,Game_theory2}
in solving this problem. Broadly speaking, the approaches used to
encourage packet-forwarding can be categorized into two general
types. The first type  makes use of virtual payments. Pricing
\cite{Crowfort_Gibbens_Kelly_Ostring02} and credit based method
\cite{Zhong_Chen_Yang03} fall into this first type. The second
type of approach is related to personal and community enforcement
to maintain the long-term relationship among nodes. Cooperation is
sustained because defection against one node causes personal
retaliation or sanction by others. \emph{Watchdog} and
\emph{pathrater} are proposed in \cite{Marti_Giuli_Kai_Baker00} to
identify misbehaving nodes and deflect traffic around them.
Reputation-based protocols are proposed in
\cite{Buchegger_LeBoudec02} and \cite{Michiardi_Molva03}. In
\cite{Altman_Kherani_Michiardi_Molva05}, a model is considered to
show cooperation among participating nodes. In \cite{Hubaux}, the
question of whether cooperation for packet forwarding can exist
without incentive mechanisms is answered using game theory and
graph theory. Packet forwarding schemes using ``TIT for TAT"
schemes are proposed in \cite{infocom03}. In \cite{hanzhu2}, a
cartel maintenance framework is constructed for distributed rate
control for wireless networks. In \cite{hanzhu1}, self-learning
repeated game approaches are constructed to enforce cooperation
and to study better cooperation. Some recent work applying game
theory to enhance energy-efficient behavior in infrastructure
networks can be found in
\cite{Meshkati1,Meshkati2,Meshkati3,Meshkati4}.

However, packet-forwarding networks are plagued by the so-called
{\em curse of the boundary nodes}. The nodes at the boundary of
the network must depend on the backbone nodes in the middle of the
networks to forward their packets. On the other hand, the backbone
nodes will not correspondingly depend on the boundary nodes. As a
result, the backbone nodes do not worry about retaliation or lost
reputation for not forwarding the packets of the boundary nodes.
This fact causes the curse of the boundary nodes. In order to cure
this curse, in this paper, we propose an approach based on
cooperative game coalitions using cooperative transmission.

Recently, cooperative transmission \cite{bib:Aazhang1}
\cite{bib:Laneman2} has gained considerable attention as a
transmit strategy for future wireless networks. The basic idea of
cooperative transmission is that relay nodes can help a source
node's transmission by relaying a replica of the source's
information. Cooperative communications efficiently takes
advantage of the broadcast nature of wireless networks, while
exploiting the inherent spatial and multiuser diversities. The
energy-efficient broadcast problem in wireless networks is
considered in \cite{Yates2}. The work in \cite{Luo} evaluates the
cooperative diversity performance when the best relay is chosen
according to the average signal-to-noise ratio (SNR), and the
outage probability of relay selection based on instantaneous SNRs.
In \cite{Bletsas}, the authors propose a distributed relay
selection scheme that requires limited network knowledge with
instantaneous SNRs. In \cite{bib:zhuAhmed}, the relay assignment
problem is solved for multiuser cooperative communications. In
\cite{bib:zhuwhohelpswhom}, cooperative resource allocation for
orthogonal frequency division multiplexing (OFDM) is studied. A
game theoretic approach for relay selection has been proposed in
\cite{globecom_zhu}. In \cite{hanzhu_CB}, the sensor nodes can
cooperate to form longer range communication links so as to bypass
the energy-depleting nodes. As a result, the network life time can
be greatly improved. In \cite{bib:Adve}, centralized power
allocation schemes are presented by assuming all the relay nodes
helped. In \cite{bib:Madsen}, cooperative routing protocols are
constructed based on non-cooperative routes.

Using cooperative transmission, boundary nodes can serve as relays
and provide some transmission benefits for the backbone nodes that
can be viewed as source nodes. In return, the boundary nodes are
rewarded for packet-forwarding. To analyze the benefits and
rewards, we investigate a game coalition that describes how much
collective payoff a set of nodes can gain and how to divide this
payoff. We investigate the stability and payoff division using
concepts such as the core, nucleolus, and Shapley function. Three
types of fairness are defined, namely, the min-max fairness using
nucleolus, average fairness using the Shapley function, and our
proposed market value fairness. Market fairness is a new fairness
concept involving multiple backbone nodes and multiple boundary
nodes, based on the specific problem treated in this paper. Then,
we construct a protocol using both repeated games and coalition
games. From the simulation results, we investigate how boundary
nodes and backbone nodes form coalitions according to different
fairness criteria. The proposed protocol can improve the network
connectivity by about 50\%, compared to the pure repeated game
approach.

This paper is organized as follows: In Section \ref{sec:model},
repeated game approaches are reviewed and the curse of the
boundary nodes is explained. In Section \ref{sec:protocol}, the
cooperative transmission model is illustrated and the
corresponding coalition games are constructed. Stability and three
types of fairness are investigated. A protocol that exploits the
properties of our approach is also proposed. Simulation results
are shown in Section \ref{sec:simulation} and conclusions are
given in Section \ref{sec:conclusion}.

\section{Repeated Games and the Curse of Boundary Nodes\label{sec:model}}

A wireless packet-forwarding network can be modeled as a directed
graph $G(L,A)$, where $L$ is the set of all nodes and $A$ is the
set of all directed links $(i,l), i,l\in L$. Each node $i$ has
several transmission destinations which are included in set $D_i$.
To reach the destination $j$ in $D_i$, the available routes form a
{\em depending graph} $G_i^j$ whose nodes represents the potential
packet-forwarding nodes. The transmission from node $i$ to node
$j$ depends on a subsect of the nodes in $G_i^j$ for
packet-forwarding. Notice that this dependency can be mutual. One
node depends on the other node, while the other node can depend on
this node as well. In general, this mutual dependency is common,
especially for backbone nodes at the center of the network. In the
remainder of this section, we will discuss how to make use of this
mutual dependency for packet-forwarding using a repeated game, and
then we will explain the curse of boundary nodes.

\subsection{Repeated Games for Mutually Dependent Nodes}

A repeated game is a special type of dynamic game (a game that is
played multiple times). When the nodes interact by playing a
similar static game (which is played only once) numerous times,
the game is called a repeated game. Unlike a static game, a
repeated game allows a strategy to be contingent on the past
moves, thus allowing reputation effects and retribution, which
give possibilities for cooperation. The game is defined as
follows:

\begin{definition}
A $T$-period {\em repeated game} is a dynamic game in which, at
each period $t$, the moves during periods $1,\dots, t -  1$ are
known to every node. In such a game, the total discounted payoff
for each node is computed by $ \sum_{t=1}^T \beta^{t -  1} u_i(t),
$ where $u_i(t)$ denotes the payoff to node $i$ at period $t$ and
where $\beta$ is a discount factor. Note that $\beta$ represents
the node's patience or on the other hand how significantly the
past affects the current payoff. If $T = \infty$, the game is
referred as an infinitely-repeated game. The average payoff to
node $i$ is then given by:
\begin{equation}
    u_i=(1 -  \beta)\sum_{t=1}^{\infty}\beta ^{t -  1}u_i(t).
    \label{payoff}
\end{equation}
\end{definition}

It is known that repeated games can be used to induce greedy nodes
in communication networks to show cooperation. In
packet-forwarding networks, if a greedy node does not forward the
packets of other nodes, it can enjoy benefits such as power
saving. However, this node will be punished by the other nodes in
the future if it depends on the other nodes to forward its own
packets. The benefit of greediness in the short term will be
offset by the loss associated with punishment in the future. So
the nodes will rather act cooperatively if the nodes are
sufficiently patient. From the Folk theorem below, we infer that
in an infinitely repeated game, any feasible outcome that gives
each node a better payoff than the Nash equilibrium
\cite{Game_theory1,Game_theory2} can be obtained.

\begin{theorem}
({\em Folk Theorem} \cite{Game_theory1,Game_theory2}) Let
($\hat{u}_1, \dots , \hat{u}_L$) be the set of payoffs from a Nash
equilibrium and let ($u_1, \dots , u_L$) be any feasible set of
payoffs. There exists an equilibrium of the infinitely repeated
game that attains any feasible solution ($u_1, \dots , u_L$) with
$u_i \geq \hat{u}_i, \forall i$ as the average payoff, provided
that $\beta$ is sufficiently close to 1.
\end{theorem}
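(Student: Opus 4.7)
The plan is to construct a \emph{grim trigger} strategy profile and verify that, once $\beta$ is close enough to $1$, no single node can profit by deviating, so the profile constitutes a Nash equilibrium of the infinitely repeated game whose per-period average payoff coincides with $(u_1,\dots,u_L)$.

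First, pick an action profile (or, if $(u_1,\dots,u_L)$ lies only in the convex hull of stage-game payoffs, a deterministic time-sharing schedule of pure profiles) $a^\ast$ that generates the target average payoff vector, and let $\hat a$ be the stage-game Nash equilibrium producing $(\hat u_1,\dots,\hat u_L)$. Each node $i$ is instructed to play its part of $a^\ast$ at every period; if in some period any node departs from $a^\ast$, every node switches to $\hat a$ forever afterwards. On the equilibrium path, the payoff stream and the averaging formula (\ref{payoff}) obviously yield $u_i$ for every $i$. Off-path, since $\hat a$ is a Nash equilibrium of the stage game, continuing to play $\hat a$ is a best response in every subsequent period, so the punishment phase is credible.

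Second, verify the one-shot deviation condition. Let $\bar u_i=\max_{a_i} u_i(a_i,a^\ast_{-i})$ be the largest stage payoff node $i$ can obtain by deviating against the prescribed profile. A deviator earns at most $\bar u_i$ in the deviation period and $\hat u_i$ thereafter, so its average payoff is bounded above by $(1-\beta)\bar u_i+\beta\hat u_i$. The profile is an equilibrium whenever
\begin{equation}
(1-\beta)\bar u_i+\beta\hat u_i\;\le\;u_i,\qquad\text{i.e.,}\qquad \bar u_i-u_i\;\le\;\frac{\beta}{1-\beta}(u_i-\hat u_i),
\end{equation}
for every $i$. Whenever $u_i>\hat u_i$ the right-hand side tends to $+\infty$ as $\beta\uparrow 1$, and since stage payoffs are bounded the inequality holds for all $\beta$ above some threshold $\beta_i<1$. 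Taking $\beta\ge\max_i\beta_i$ makes the grim-trigger profile an equilibrium with the desired average payoff. The borderline case $u_i=\hat u_i$ is trivial because the player is indifferent between conforming and permanently playing $\hat a$.

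The main technical obstacle I expect is the feasibility step: a generic feasible vector is not attainable by a single pure profile, so $a^\ast$ must be realized as a deterministic rotation through finitely many profiles with rational weights (approximated for irrational weights). One then has to check that the deviation bound still works period-by-period rather than just on average, which is handled by slightly tightening $\beta$ so that the punishment term dominates even the worst single-period temptation within the rotation. Once this bookkeeping is in place, the argument above goes through verbatim and establishes the theorem.
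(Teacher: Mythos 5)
The paper does not actually prove this theorem --- it is stated as a classical result and attributed to the cited game-theory texts \cite{Game_theory1,Game_theory2} --- so there is no in-paper argument to compare against. Your grim-trigger construction is the standard Friedman-style proof of the Nash-threats folk theorem, and the core of it is sound: the algebra reducing the no-deviation condition to $\bar u_i-u_i\le\frac{\beta}{1-\beta}(u_i-\hat u_i)$ is correct, the observation that the right-hand side diverges as $\beta\uparrow 1$ when $u_i>\hat u_i$ is the heart of the matter, and you correctly flag that general feasible vectors require a deterministic time-sharing cycle whose per-period continuation values must each satisfy the deviation bound (handled by pushing $\beta$ closer to $1$).

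The one genuine gap is your treatment of the borderline case $u_i=\hat u_i$. You dismiss it as trivial ``because the player is indifferent between conforming and permanently playing $\hat a$,'' but indifference between those two paths is not the relevant comparison. The profitable deviation is to grab $\bar u_i=\max_{a_i}u_i(a_i,a^\ast_{-i})$ for one period and then revert to $\hat a$; if $\bar u_i>u_i$ (which is generic unless $a_i^\ast$ happens to be a stage best response to $a^\ast_{-i}$), this yields $(1-\beta)\bar u_i+\beta\hat u_i>u_i$ for \emph{every} $\beta<1$, so grim trigger fails there and no choice of discount factor rescues it. This is precisely why careful statements of the theorem require $u_i>\hat u_i$ strictly (or restrict to payoffs in the interior of the individually rational feasible set); the paper's ``$u_i\ge\hat u_i$'' is loose, and your proof silently inherits that looseness. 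Either restrict to strict inequality, or note explicitly that players with $u_i=\hat u_i$ must be assigned actions that are already stage best responses (e.g., their Nash actions), in which case the argument closes.
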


In the literature of packet-forwarding wireless networks, the
conclusion of the above Folk theorem is achieved by several
approaches. Tit-for-tat \cite{Altman_Kherani_Michiardi_Molva05}
\cite{infocom03} is proposed so that all mutually dependent nodes
have the same set of actions. A cartel maintenance scheme
\cite{hanzhu2} has closed-form optimal solutions for both
cooperation and non-cooperation. A self-learning repeated game
approach is proposed in \cite{hanzhu1} for individual distributed
nodes to study the cooperation points and to develop protocols for
maintaining them. Given the previous attention to the problem of
nodes having mutual dependency, we will assume in this paper that
the packet-forwarding problem of selfish nodes with mutual
dependency has been solved and we will focus instead on the
problems encountered by the boundary nodes.




\subsection{Curse of Boundary Nodes\label{curse}}

When there is no mutual dependency, the  curse of boundary nodes
occurs, an example of which is shown in Figure \ref{example}.
Suppose node $1$ needs to send data to node $3,$ and node $2$
needs to send data to node $0$. Because node $1$ and node $2$
depend on each other for packet-forwarding, they are obliged to do
so because of the possible threat or retaliation from the other
node. However, if node $0$ wants to transmit to node $2$ and node
$3$, or node $3$ tries to communicate with node $0$ and node $1$,
the nodes in the middle have no incentive to forward the packets
due to their greediness. Moreover, this greediness cannot be
punished in the future since the dependency is not mutual. This
problem is especially severe for the nodes on the boundary of the
network, so it is called {\em the curse of boundary nodes}.

On the other hand, if node $0$ can form a coalition with node $1$
and help node $1$'s transmission (for example to reduce the
transmitted power of node $1$), then node $1$ has an incentive to
help node $0$ transmit as a reward. A similar situation arises
 for node $3$ to form a coalition with node $2$. We call
 nodes like $1$ and $2$ {\it backbone nodes}, while nodes
like $0$ and $3$ are  {\it boundary nodes}. In the following
section, we will study how coalitions can be formed to address
this issue using cooperative transmission.

\section{Coalition Games with Cooperative Transmission}\label{sec:protocol}

In this section, we first study a cooperative transmission
technique that allows nodes to participate in coalitions. Then, we
formulate a coalition game with cooperative transmission.
Furthermore, we investigate the fairness issue and propose three
types of fairness definitions. Finally, a protocol for
packet-forwarding using repeated games and coalition games is
constructed.

\subsection{Cooperative Transmission System Model}

First, we discuss the traditional direct transmission case. The
source transmits its information to the destination with power
$P_d$. The received SNR is
\begin{equation}\label{SNR_direct}
\Gamma_{d}=\frac {P_d|h_{s,d}|^2}{\sigma ^2},
\end{equation}
where $h_{s,d}$ is the channel response from the source to the
destination and $\sigma^2$ is the noise level. To achieve the
minimal link quality $\gamma$, we need for the transmitted power
to be sufficiently large so that $\Gamma_d \geq \gamma$. The
transmitted power is also upper bounded by $P_{max}$.


Next, we consider  multiple nodes using the amplify-and-forward
protocol \cite{bib:Aazhang1}\footnote{Other cooperative
transmission protocols can be exploited in a similar way.} to
transmit in two stages as shown in Figure \ref{example_ct}. In
stage one, the source node (denoted as node $0$) transmits its
information to the destination, and due to the broadcast nature of
the wireless channels, the other nodes can receive the
information. In stage two, the remaining $N$ relay nodes help the
source by amplifying the source signal. In both stages, the source
and the relays transmit their signals through orthogonal channels
using schemes like TDMA, FDMA, or orthogonal CDMA.

In stage one, the source transmits its information, and the
received signals at the destination and the relays can be written
respectively as
\begin{equation}
y_{s,d}=\sqrt{P_0}h_{s,d}x+n_{s,d},
\end{equation}
\begin{equation}
\mbox{and  }y_{s,r_i}=\sqrt{P_0}h_{s,r_i}x+n_{s,r_i}, \forall i\in
\{1,\dots, N\},
\end{equation}
where $P_0$ is the transmitted power of the source, $x$ is the
transmitted symbol with unit power, $h_{s,r_i}$ is the channel
gain from the source to relay $i$, and $n_{s,d}$ and $n_{s,r_i}$
are noise processes at the destination and relay, respectively.
Without significant loss of generality, we assume that all noises
have the same power $\sigma ^2$.

In stage two, each relay amplifies the received signal from the
source and retransmits it to the destination. The received signal at
the destination for relay $i$ can be written as
\begin{equation}
y_{r_i,d}=\frac{\sqrt{P_i}}{\sqrt{P_0|h_{s,r_i}|^2+\sigma^2}}
h_{r_i,d}y_{s,r_i}+n_{r_i,d},
\end{equation}
where $P_i$ is relay $i$'s transmit power, $h_{r_i,d}$ is the
channel gain from relay $i$ to the destination,  and $n_{r_i,d}$
is noise with variance $\sigma ^2$.

At the destination, the signal received at stage one and the $N$
signals received at stage two are combined using maximal ratio
combining (MRC). The SNR at the output of MRC is
\begin{equation}\label{SNR_coop}
\Gamma=\Gamma_0+\sum _{i=1}^N \Gamma _i,
\end{equation}
where $\Gamma_0=\frac{P_0|h_{s,d}|^2}{\sigma ^2}$ and
\begin{equation}
\Gamma _i =\frac {P_0P_i|h_{s,r_i}|^2|h_{r_i,d}|^2} {\sigma^2
(P_0|h_{s,r_i}|^2+P_i|h_{r_i,d}|^2+\sigma^2)}.
\end{equation}

On comparing (\ref{SNR_coop}) with (\ref{SNR_direct}), in order to
achieve the desired link quality $\gamma$, we can see that the
required power is always less than the direct transmission power,
i.e., $P_0< P_d$. So cooperation transmission can reduce the
transmit power of the source node. This fact can give incentives
of mutual benefits for the backbone nodes (acting as sources) and
the boundary nodes (acting as relays), and consequently can cure
the curse of the boundary nodes mentioned in Section \ref{curse}.

\subsection{Coalition Game Formation for Boundary Nodes}

In this subsection, we study possible coalitions between the
boundary nodes and the backbone nodes, for situations in which the
boundary nodes can help relay the information of the backbone
nodes using cooperative transmission. In the following, we first
define some basic concepts that will be needed in our analysis.

\begin{definition}
A {\em coalition} $S$ is defined to be a subset of the total set of
nodes $\mathbb{N}=\{0,\dots , N\}$. The nodes in a coalition want
to cooperate with each other. The {\em coalition form} of a game
is given by the pair $(\mathbb{N},v)$, where $v$ is a real-valued
function, called  the {\em characteristic function}. $v(S)$ is the
value of the cooperation for coalition $S$ with the following
properties:
\begin{enumerate}
\item $v(\emptyset)=0$.

\item Super-additivity: if $S$ and $Z$ are disjoint coalitions
($S\bigcap Z=\emptyset$), then $v(S)+v(Z)\leq v(S\bigcup Z)$.
\end{enumerate}
\end{definition}

The coalition states the benefit obtained from cooperation
agreements. But we still need to examine whether or not the nodes
are willing to participate in the coalition. A coalition is called
{\it stable} if no other coalition will have the incentive and
power to upset the cooperative agreement. Such division of $v$ is
called a point in the {\em core}, which is defined by the
following definitions.

\begin{definition}
A payoff vector $\textbf U=(U_0,\dots, U_{N})$ is said to be {\em
group rational} or {\em efficient} if $\sum_{i=0}^{N}U_i=
v(\mathbb{N})$. A payoff vector $\textbf U$ is said to be {\em
individually rational} if the node can obtain the benefit no less
than acting alone, i.e. $U_i\geq v(\{i\}),\ \forall i$. An {\em
imputation} is a payoff vector satisfying the above two
conditions.
\end{definition}

\begin{definition}
An imputation $\textbf U$ is said to be unstable through a
coalition $S$ if $v(S)>\sum_{i\in S}U_i$, i.e., the nodes have
incentive for coalition $S$ and upset the proposed $\textbf U$.
The set $C$ of a stable imputation is called the {\em core}, i.e.,
\begin{equation}
C=\{\textbf U:\sum_{i\in \mathbb{N}}U_i=v(\mathbb{N}) \mbox{ and
}\sum_{i\in S}U_i\geq v(S),\ \forall S\subset \mathbb{N}\}.
\end{equation}
\end{definition}


In the economics literature, the core gives a reasonable set of
possible shares. A combination of shares is in the core if there
is no sub-coalition in which its members may gain a higher total
outcome than the combination of shares of concern. If a share is
not in the core, some members may be frustrated and may think of
leaving the whole group with some other members and form a smaller
group.

In the packet-forwarding network as shown in Figure
\ref{example_ct}, we first assume one backbone node to be the
source node (node $0$) and the nearby boundary nodes (node $1$ to
node $N$) to be the relay nodes. We will discuss the case of
multiple source nodes later. If no cooperative transmission is
employed, the utilities for the source node and the relay nodes
are
\begin{equation}
v(\{ 0\})= -  P_d,\mbox{ and } v(\{ i\})= -  \infty, \forall i=1,\dots,
N.
\end{equation}
Here a utility of $ -  \infty$ means that even though a boundary user tries to use
maximal power for transmission, it cannot successfully deliver any
packets due to the curse.

With cooperative transmission and a grand coalition that includes
all nodes, the utilities for the source node and the relay nodes
are
\begin{equation}\label{alpha}
U_0= -  P_0 -  \sum_{i=1}^N \alpha_i P_d
\end{equation}
\begin{equation}\label{Ui}
\mbox{and }U_i= -  \frac{P_i}{\alpha_i},
\end{equation}
where $\alpha_i$ is the ratio of the number of packets that the
backbone node is willing to forward for boundary node $i$, to the
number of packets that the boundary node $i$ relays for the
backbone node using cooperative transmission. Smaller $\alpha_i$
means the boundary nodes have to relay more packets before
realizing the rewards of packet forwarding. The other
interpretation of the utility is as the average power per
transmission for the boundary nodes\footnote{Notice that we omit
the transmitted power needed to send the boundary node's own
packet to the backbone node, since it is irrelevant to the
coalition.}. The following theorem gives conditions under which
the core is not empty, i.e, in which the grand coalition is
stable.

\begin{theorem}
The core is not empty if $\alpha_i\geq 0,\ i=1,\dots, N$, and
$\alpha_i$ are such that $U_0\geq v(\{ 0\})$, i.e,
\begin{equation}\label{condition_core}
\sum_{i=1}^N \alpha_i \leq \frac{P_d -  P_0}{P_d}.
\end{equation}
\end{theorem}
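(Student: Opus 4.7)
The plan is to exhibit a concrete payoff vector, namely the one defined by equations (\ref{alpha})--(\ref{Ui}), and verify directly that under the stated hypothesis it satisfies all three requirements of a core element: group rationality, individual rationality, and stability against every proper sub-coalition $S \subsetneq \mathbb{N}$. Since the theorem only asserts non-emptiness, exhibiting a single witness is enough.

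Verifying group rationality is immediate from the way the utilities in (\ref{alpha})--(\ref{Ui}) were constructed: $v(\mathbb{N})$ is defined as the aggregate payoff attained when the grand coalition jointly performs cooperative transmission, so $\sum_{i=0}^{N} U_i = v(\mathbb{N})$ by construction. For individual rationality, each boundary node is fine trivially because $v(\{i\}) = -\infty$ and $U_i = -P_i/\alpha_i$ is finite (using $\alpha_i > 0$); the interesting check is for the backbone node $0$, where I would simply rearrange $U_0 \geq v(\{0\}) = -P_d$ using (\ref{alpha}) to obtain $P_0 + \sum_{i=1}^N \alpha_i P_d \leq P_d$, which is exactly (\ref{condition_core}). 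So the hypothesis is nothing other than the individual-rationality condition for node $0$.

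The main obstacle is sub-coalition stability, i.e., $\sum_{i \in S} U_i \geq v(S)$ for every $S \subsetneq \mathbb{N}$. I would handle this by a case split. If $0 \notin S$, then $S$ consists entirely of boundary nodes; since each boundary node is cursed without access to the backbone node, no packet of theirs can be delivered and $v(S) = -\infty$, making the inequality trivial. If $0 \in S$ but $S \neq \mathbb{N}$, then the sub-coalition contains the source plus a strict subset of relays; I would argue that dropping relays from the grand coalition can only reduce the SNR gain in (\ref{SNR_coop}) and hence can only increase the required $P_0$, so the best value the sub-coalition can achieve is upper-bounded by the portion of $v(\mathbb{N})$ not already allocated to the excluded nodes, i.e., $v(S) \leq v(\mathbb{N}) - \sum_{i \notin S} U_i = \sum_{i \in S} U_i$. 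The monotonicity-in-relays claim is the most delicate point and follows from the additive structure of (\ref{SNR_coop}) together with the fact that each $\Gamma_i \geq 0$.

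Combining these three verifications shows that the proposed imputation lies in $C$, so $C$ is nonempty, completing the argument.
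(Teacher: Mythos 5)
Your proof is correct and follows essentially the same route as the paper's: both arguments rest on the facts that an isolated boundary node earns $-\infty$, that adding relays monotonically increases the combined SNR in (\ref{SNR_coop}) and hence reduces the required source power $P_0$, and that (\ref{condition_core}) is precisely node $0$'s participation (individual-rationality) constraint. You merely repackage the paper's incentive-based argument as an explicit verification of the core definition for the witness imputation (\ref{alpha})--(\ref{Ui}), which is, if anything, slightly more explicit about ruling out blocking sub-coalitions.
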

\begin{proof}
First, any relay node will get $ -  \infty$ utility if it leaves
the coalition with the source node, so no node has incentive to
leave the coalition with node $0$. Then, from (\ref{SNR_coop}),
the inclusion of relay nodes will increase the received SNR
monotonically. So $P_0$ will decrease monotonically with the
addition of any relay node. As a result, the source node has an
incentive to include all the relay nodes, as long as the source
power can be reduced, i.e., $U_0\geq v(\{ 0\})$. A grand coalition
is formed and the core is not empty if (\ref{condition_core})
holds.
\end{proof}

The concept of the core defines the stability of a utility
allocation. However, it does not define how to allocate the
utility. For the proposed game, each relay node can obtain
different utilities by using different values of  $\alpha_i$. In
the next three subsections, we study how to achieve min-max
fairness, average fairness, and market fairness.

\subsection{Min-Max Fairness of a Game Coalition using Nucleolus}

We introduce the concepts of {\em excess}, {\em kernel}, and {\em
nucleolus}\cite{Game_theory1,Game_theory2}. For a fixed
characteristic function $v$, an imputation $\textbf U$ is found
such that, for each coalition $S$ and its associated
dissatisfaction, an optimal imputation is calculated to minimize
the maximum dissatisfaction. The dissatisfaction is quantified as
follows.
\begin{definition}
The measure of dissatisfaction of an imputation $\textbf U$
for a coalition $S$ is defined as the {\em excess}:
\begin{equation}
e(\textbf U,S)=v(S) -  \sum_{j\in S}U_j.
\end{equation}
\end{definition}
Obviously, any imputation $\textbf U$ is in the core, if and only
if all its excesses are negative or zero.

\begin{definition}
A {\em kernel} of $v$ is the set of all allocations $\textbf U$
such that
\begin{equation}
\max_{S \subseteq \mathbb{N} -  j,i\in S}e(\textbf U,S)=\max_{T
\subseteq \mathbb{N} -  i,j\in T}e(\textbf U,T).
\end{equation}
If nodes $i$ and $j$ are in the same coalition, then the highest
excess that $i$ can make in a coalition without $j$ is equal to
the highest excess that $j$ can make in a coalition without $i$.
\end{definition}

\begin{definition}
The {\em nucleolus} of a game is the allocation $\textbf U$ that
minimizes the maximum excess:
\begin{equation}
    \textbf U=\arg \min _{\textbf U} (\max\ e(\textbf U,S),\ \forall S).
\end{equation}
\end{definition}

The nucleolus of a game has the following property: The nucleolus
of a game in coalitional form exists and is unique. The nucleolus
is group rational and individually rational. If the core is not
empty, the nucleolus is in the core and kernel. In other word, the
nucleolus is the best allocation under the min-max criterion.

Using the above concepts, we prove the following theorem to show
the optimal $\alpha_i$ in (\ref{alpha}) to have min-max fairness.
\begin{theorem}
The maximal $\alpha_i$ to yield the nucleolus of the proposed
coalition game is given by
\begin{equation}\label{minmax_solution}
\alpha_i=\frac{P_d -  P_0(\mathbb{N})}{NP_d},
\end{equation}
where $P_0(\mathbb{N})$ is the required transmitted power of the
source when all relays transmit with transmitted power $P_{max}$.
\end{theorem}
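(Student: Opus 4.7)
The plan is to exploit the strong structural asymmetry of this particular coalition game: the source node has finite stand-alone value $v(\{0\})=-P_d$, while every boundary node has $v(\{i\})=-\infty$, yet the boundary nodes are completely interchangeable with respect to $P_0(\cdot)$ and the combined SNR formula \eqref{SNR_coop}. I expect this to force the nucleolus into a symmetric allocation that saturates the core bound coming from the previous theorem.

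First I would prune the set of coalitions that the nucleolus actually sees. For any $S\not\ni 0$, super-additivity together with $v(\{i\})=-\infty$ gives $v(S)=-\infty$, so the excess $e(\mathbf U,S)=v(S)-\sum_{j\in S}U_j=-\infty$ and these coalitions can never be the $\arg\max$ in the nucleolus program. Thus the lexicographic min-max is driven entirely by coalitions of the form $\{0\}\cup S'$ with $S'\subseteq\{1,\dots,N\}$. In particular the singleton $\{0\}$ gives the clean expression
\begin{equation*}
e(\mathbf U,\{0\})=v(\{0\})-U_0=-P_d+P_0(\mathbb N)+P_d\sum_{i=1}^N\alpha_i,
\end{equation*}
whose non-positivity is exactly the core condition \eqref{condition_core}.

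Next I would invoke the kernel property that every nucleolus satisfies. For any two boundary nodes $i\ne j$, the balance condition $\max_{S\subseteq\mathbb N-j,\,i\in S}e(\mathbf U,S)=\max_{T\subseteq\mathbb N-i,\,j\in T}e(\mathbf U,T)$ couples coalitions that differ only by swapping the roles of $i$ and $j$. Because each relay enters the characteristic function in exactly the same way (each transmits at $P_{\max}$ and contributes an interchangeable term to the MRC sum in \eqref{SNR_coop}), this pairwise balance forces $\alpha_i=\alpha_j$ for all $i,j$. Writing the common value as $\alpha$, every excess of a coalition containing $0$ becomes a function of the single parameter $\alpha$ and the cardinality $|S'|$.

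Finally, with $\alpha_i\equiv\alpha$, the remaining degree of freedom is fixed by combining two facts: the nucleolus is in the core whenever the core is non-empty (so $N\alpha\le(P_d-P_0(\mathbb N))/P_d$), and, among all nucleolus-compatible symmetric allocations, the statement asks for the \emph{maximal} $\alpha$ (which makes each boundary payoff $U_i=-P_i/\alpha$ least negative). Saturating the core inequality therefore gives $\alpha=(P_d-P_0(\mathbb N))/(NP_d)$, which is the claimed formula.

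The main obstacle is the middle step: rigorously showing that the kernel balance, applied over the rich family of coalitions $\{S:0\in S,\ i\in S,\ j\notin S\}$, forces exact equality of $\alpha_i$ and $\alpha_j$ rather than merely some weaker linear relation among them. A cleaner but more computational alternative is to write every excess $e(\mathbf U,\{0\}\cup S')$ explicitly as a function of $\{\alpha_i\}$, set up the lexicographic min-max program defining the nucleolus, and appeal to KKT conditions on the resulting convex problem; the symmetric optimum then emerges directly, and substituting back into $\sum\alpha_i=(P_d-P_0(\mathbb N))/P_d$ recovers \eqref{minmax_solution}.
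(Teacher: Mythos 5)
Your overall skeleton (discard coalitions with $-\infty$ excess, force all $\alpha_i$ equal, then saturate the core bound \eqref{condition_core}) matches the paper's, and your pruning step is actually stated more carefully than the paper's own (the paper claims every non-grand coalition has excess $-\infty$, which is false for coalitions containing node $0$; only coalitions excluding node $0$ do). The saturation step at the end is also the same as the paper's.

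The genuine gap is your middle step. You derive $\alpha_i=\alpha_j$ from a symmetry claim: that the relays ``enter the characteristic function in exactly the same way'' and contribute interchangeable terms to the MRC sum in \eqref{SNR_coop}. That premise is false. Each relay's contribution $\Gamma_i$ depends on its own channel gains $h_{s,r_i}$ and $h_{r_i,d}$, so $v(S\cup\{i\})\neq v(S\cup\{j\})$ in general and the Symmetry Axiom / kernel balance does not apply; indeed the paper's own Shapley-value theorem, \eqref{shapley_solution}, assigns \emph{different} $\alpha_i$ to different relays precisely because of this asymmetry. The paper reaches equal $\alpha_i$ by a different route: it imposes the min-max criterion directly on the boundary nodes' utilities, setting $U_i=-P_i/\alpha_i=\mu$ for all $i$, which gives $\alpha_i\propto P_i$ (via \eqref{Ui}) with the proportionality constant fixed by saturating \eqref{condition_core}, i.e.\ $\alpha_i=\frac{P_i}{\sum_j P_j}\cdot\frac{P_d-P_0}{P_d}$; equality of the $\alpha_i$ then follows only because every relay transmits at the common maximal power $P_{\max}$ (which the paper separately argues is optimal since it maximizes the achievable $\alpha_i$). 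So the equal split is a consequence of equal transmit powers, not of any symmetry of the game. You honestly flag this step as the main obstacle, but your proposed repair (KKT on the lexicographic program) would still have to confront the relay asymmetry, and without the $P_i\equiv P_{\max}$ observation it would not produce \eqref{minmax_solution}.
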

\begin{proof}
Since for any coalition other than the grand coalition, the excess
will be $ -  \infty$, we need only consider the grand coalition.
Suppose the min-max utility is $\mu$ for all nodes, i.e.
\begin{equation}
\mu= -  \frac {P_i}{\alpha_i}.
\end{equation}
From (\ref{condition_core}) and since $U_i$ is monotonically
increasing with $\alpha_i$ in (\ref{Ui}), we have
\begin{equation}
{\alpha_i} = \frac{P_i}{\sum_{i=1}^N{P_i}}\cdot
\frac{(P_d -  P_0)}{P_d}.
\end{equation}
Since $P_0$ in (\ref{SNR_coop}) is a monotonically increasing
function of $P_i$, to achieve the maximal $\alpha_i$ and $\mu$,
each relay transmits with the largest possible power $P_{max}$.
Notice here we assume the backbone node can accept arbitrarily
small power gain to join the coalition.
\end{proof}

\subsection{Average Fairness of Game Coalition using the Shapley Function}

The core concept defines the stability of an allocation of payoff
and the nucleolus concept quantifies the min-max fairness of a
game coalition. In this subsection, we study another average
measure of fairness for each individual using the concept of a
Shapley function \cite{Game_theory1,Game_theory2}.

\begin{definition}
A {\em Shapley function} $\phi$ is a function that assigns to each
possible characteristic function $v$ a vector of real numbers,
i.e.,
\begin{equation}
    \phi (v)=(\phi_0(v),\phi_1(v),\phi_2(v),\dots, \phi_N(v))
\end{equation}
where $\phi_i(v)$ represents the worth or value of node $i$ in the
game. There are four Shapley Axioms that $\phi(v)$ must satisfy
\begin{enumerate}
\item {\em Efficiency Axiom}: $\sum_{i\in
\mathbb{N}}\phi_i(v)=v(\mathbb{N})$.

\item {\em Symmetry Axiom}: If node $i$ and node $j$ are such that
$v(S\bigcup \{i\})=v(S\bigcup \{j\})$ for every coalition $S$ not
containing node $i$ and node $j$, then $\phi_i(v)=\phi_j(v)$.

\item {\em Dummy Axiom}: If node $i$ is such that $v(S)=v(S\bigcup
\{ i\})$ for every coalition $S$ not containing $i$, then
$\phi_i(v)=0$.

\item {\em Additivity Axiom}: If $u$ and $v$ are characteristic
functions, then $\phi(u+v)=\phi(v+u)=\phi(u)+\phi(v)$.

\end{enumerate}

It can be proved that there exists a unique function $\phi$
satisfying the Shapley axioms. Moreover, the Shapley function can
be calculated as
\begin{equation}
\phi_i(v)=\sum_{S\subset
\mathbb{N} -  i}\frac{(|S|)!(N -  |S|)!}{(N+1)!}[v(S\cup \{ i\}) -  v(S)].
\end{equation}
Here $|S|$ denotes the size of set $S$ and $\mathbb{N}=\{0, 1,
\dots N\}$.

\end{definition}

The physical meaning of the Shapley function can be interpreted as
follows. Suppose one backbone node plus $N$ boundary nodes form a
coalition. The nodes join the coalition in random order. So there
are $(N+1)!$ different ways that the nodes might be ordered in
joining the coalition. For any set $S$ that does not contain node
$i$, there are $|S|!(N -  |S|)!$ different ways to order the nodes
so that $S$ is the set of nodes that enter the coalition before
node $i$. Thus, if the various orderings are equally likely,
$|S|!(N - |S|)!/(N+1)!$ is the probability that, when node $i$
enters the coalition, the coalition of $S$ is already formed. When
node $i$ finds $S$ ahead of it as it joins the coalition, then its
marginal contribution to the worth of the coalition is $v(S\cup \{
i\}) - v(S)$. Thus, under the assumption of randomly-ordered
joining, the Shapley function of each node is its expected
marginal contribution when it joins the coalition.

In our specific case, we consider the case in which the backbone
node is always in the coalition, and the boundary nodes randomly
join the coalition. We have $v(\{0\})= -  P_d$ and
\begin{equation}
    v(\mathbb{N})=P_d -  P_0(\mathbb{N}) -  \sum_{i\in \mathbb{N}} \alpha_i P_d,
\end{equation}
which is the overall power saving. The problem here is how to find
a given node's $\alpha_i$ that satisfies the average fairness,
which is addressed by the following theorem.

\begin{theorem}
The maximal $\alpha_i$ that satisfies the average fairness with
the physical meaning of the Shapley function is given by
\begin{equation}\label{shapley_solution}
\alpha_i=\frac{P_i^s}{P_d},
\end{equation}
where $P_i^s$ is the average power saving with random entering
orders, which is defined as
\begin{equation}
P_i^s=\frac{1}{N}[P_d -  P_0(\{i\})] +\frac{\sum_{j=1,j\neq i}^N
[P_0(\{j\}) -   P_0(\{i,j\})]}{N(N -  1)} +\cdots.
\end{equation}
\end{theorem}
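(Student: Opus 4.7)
The plan is to specialize the Shapley formula to the reduced coalition game in which the backbone node (node~$0$) is permanently in the coalition and the $N$ boundary nodes enter in a uniformly random order, and then to convert the resulting fair share into a formula for $\alpha_i$. I would first introduce the restricted characteristic function $\tilde{v}(R) = P_d - P_0(R)$ for relay subsets $R \subseteq \{1,\dots,N\}$, with $\tilde{v}(\emptyset) = 0$; this is the total transmit power that the source saves relative to direct transmission when the active relays are exactly those in $R$. By the SNR expression in (\ref{SNR_coop}), $P_0(R)$ is monotonically non-increasing in $R$, so $\tilde{v}$ is non-decreasing and super-additive, hence a legitimate worth function on $N$ players.

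Next I would apply the $N$-player Shapley formula to boundary node $i$. The marginal contribution of $i$ to a preceding relay set $R \subseteq \{1,\dots,N\}\setminus\{i\}$ is $\tilde{v}(R\cup\{i\}) - \tilde{v}(R) = P_0(R) - P_0(R\cup\{i\})$, and the weight for a set $R$ of cardinality $r$ is $r!(N-1-r)!/N!$, namely the probability that exactly $R$ precedes $i$ in a uniformly random ordering of the $N$ boundary nodes. Grouping terms by $|R|$ reproduces precisely the telescoping series $P_i^s$ displayed in the theorem statement, so the Shapley allocation to boundary node $i$ is $\phi_i = P_i^s$.

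To convert this value into $\alpha_i$, I would invoke (\ref{Ui}) together with the interpretation of $\alpha_i$ in (\ref{alpha}): each cooperative relay that boundary node $i$ performs is reciprocated by $\alpha_i$ forwarded packets from the backbone, and each forwarded packet spares the boundary an amount of order $P_d$ that would otherwise be spent in direct transmission. Hence the per-relay benefit accruing to boundary node $i$ is $\alpha_i P_d$, and equating this with $\phi_i = P_i^s$ yields $\alpha_i = P_i^s/P_d$, as claimed.

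The hard part will be interpreting the word ``maximal.'' Since the Shapley function prescribes a unique fair allocation, maximality here is not over different fair $\alpha_i$ but over the relay powers $P_k$ that determine $P_0(\cdot)$: because (\ref{SNR_coop}) makes $P_0(R)$ strictly decreasing in each relay power, having every boundary node transmit at $P_{\max}$ (as in the nucleolus theorem) maximizes every marginal contribution, every Shapley share, and hence every $\alpha_i$ compatible with average fairness. One should also verify that the resulting allocation lies in the core, i.e., that condition (\ref{condition_core}) is met; this follows from the efficiency axiom, since $\sum_{i=1}^N \alpha_i P_d = \sum_i \phi_i = \tilde{v}(\{1,\dots,N\}) = P_d - P_0(\mathbb{N})$.
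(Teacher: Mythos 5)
Your proof is correct and follows essentially the same route as the paper: you compute the Shapley value over random entering orders of the boundary nodes (with the backbone always present), obtain the telescoping series $P_i^s$, and use the efficiency axiom ($\sum_j P_j^s = P_d - P_0(\mathbb{N})$) to conclude that the rewards $\alpha_i P_d = P_i^s$ exactly exhaust the total saving, which is the identity the paper merely asserts. The only cosmetic difference is bookkeeping: the paper folds the reward terms $-\alpha_i P_d$ into the characteristic function, imposes proportionality $\alpha_i/\alpha_j=\phi_i/\phi_j$, and obtains maximality from the constraint $v(\mathbb{N})\ge 0$ (the backbone accepting arbitrarily small gain), whereas you work with the gross saving $\tilde v(R)=P_d-P_0(R)$ and recover the same normalization directly, so your reading of ``maximal'' (relays at $P_{\max}$ plus the core constraint met with equality) lands on the same answer.
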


\begin{proof}
The maximal $\alpha_i$ is solved by the following equations:
\begin{equation}\label{Shapley_eqn}
\left\{
\begin{array}{l}
\frac{\alpha_i}{\alpha_j}=\frac{\phi_i}{\phi_j},\\
v(\mathbb{N})\geq 0.\\
\end{array}
\right.
\end{equation}
The first equation in (\ref{Shapley_eqn}) is the average fairness
according to the Shapley function, and the second equation in
(\ref{Shapley_eqn}) is the condition for a non-empty core. Similar
to min-max fairness, we assume that the backbone node can accept
arbitrarily small power gain to join the coalition.

If boundary node $i$ is the first to join the coalition, the
marginal contribution for power saving is $
\frac{1}{N}[P_d -  P_0(\{i\}) -  \alpha_i P_d]$, where $\frac 1 N$ is
the probability. If boundary node $i$ is the second to join the
coalition, the marginal contribution is $\frac{\sum_{j=1,j\neq
i}^N [P_0(\{j\})+\alpha_j P_d -   P_0(\{i,j\}) -   (\alpha_i+\alpha_j)
P_d]}{N(N -  1)}$. By means of some simple derivations, we can obtain
the Shapley function $\phi_i$ as
\begin{eqnarray}\label{phii}
\phi_i=  -  \alpha_i P_d+ \frac{1}{N}[P_d -  P_0(\{i\})]
+\frac{\sum_{j=1,j\neq i}^N [P_0(\{j\}) -   P_0(\{i,j\})]}{N(N -  1)}
+\cdots,
\end{eqnarray}
and then we can obtain
\begin{equation}
\alpha_i=\frac{[P_d -  P_0(\mathbb{N})]P_i^s}{P_d\sum_{j=1}^N P_j^s}.
\end{equation}
Since
\begin{equation}
P_d -  P_0(\mathbb{N})=\sum_{j=1}^N P_j^s,
\end{equation}
we prove (\ref{shapley_solution}).
\end{proof}

Notice that different nodes have different values of $P_i^s$, due
to their channel conditions and abilities to reduce the backbone
node's power. Compared with the min-max fairness in the previous
subsection, the average fairness using the Shapley function gives
different nodes different values of $\alpha_i$ according to their
locations.

%

%

\subsection{Market Fairness of Game Coalition with Multiple Backbone Nodes}

In the previous two subsections, we have discussed two types of
fairness with one backbone node and multiple boundary nodes.
However, since the boundary nodes depend entirely on the backbone
node for packet forwarding, the backbone node can disregard the
fairness and coerce the boundary nodes by asking for an arbitrary
amount of payoff before helping the boundary nodes send their
packets. The backbone node can join the coalition only if
$v(\mathbb{N})>v_0$, where $v_0$ is a positive value. So the
$\alpha_i$ in (\ref{minmax_solution}) and (\ref{shapley_solution})
becomes
\begin{equation}
\alpha_i=\left\{
\begin{array}{ll}
\frac{P_d -  P_0(\mathbb{N}) -  v_0}{NP_d}, & \mbox{min-max fairness}\\
\frac{[P_d -  P_0(\mathbb{N}) -  v_0]P_i^s}{P_d[P_d -  P_0(\mathbb{N})]}, &
\mbox{average fairness}.
\end{array}
\right.
\end{equation}

Thus, a greedy backbone node can increase $v_0$ sufficiently large
so that the boundary nodes receive arbitrarily small $\alpha_i$.
This means that the backbone node can arbitrarily impose on the
boundary nodes for relaying and almost never give rewards in
return. The underlying reason for this is because the backbone
node has no competition from other nodes. In economic networks,
this phenomena is called ``monopoly" and the consumer suffers a
minimal quality of services as a result.

To solve this problem, we discuss the case in which there are
multiple backbone nodes, which is similar to ``antitrust" in
economic networks. First, we prove the following theorem for the
core with multiple backbone nodes.
\begin{theorem}
If the number of backbone nodes is greater than $1$, the core is
surely empty.
\end{theorem}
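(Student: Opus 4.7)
The plan is to work by contradiction, supposing that some imputation $\mathbf U$ lies in the core of the grand-coalition game with $M\ge 2$ backbone sources $B_1,\ldots,B_M$ and $N$ boundary relays $R_1,\ldots,R_N$. The aim is to exhibit an explicit blocking sub-coalition built from the observation that the relays' combined transmit capacity is a scarce resource that the backbone sources must share.

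The first step is to isolate, for each backbone index $k$, the sub-coalition $S_k=\{B_k\}\cup\{R_1,\ldots,R_N\}$ containing exactly one source together with every relay. Inside $S_k$ the full transmit budget $P_{\max}$ of each boundary is dedicated to a single source, so by the cooperative SNR formula (\ref{SNR_coop}) the required source power $P_0(S_k)$ is strictly smaller than anything $B_k$ can attain in the grand coalition where the same relays must also assist the other $M-1$ sources. Writing the core inequality $U_{B_k}+\sum_i U_{R_i}\ge v(S_k)$ for every $k$, summing over $k$, and substituting the efficiency identity $\sum_k U_{B_k}+\sum_i U_{R_i}=v(\mathbb N)$ yields
\begin{equation*}
(M-1)\sum_{i=1}^{N} U_{R_i}\;\ge\;\sum_{k=1}^{M} v(S_k)-v(\mathbb N).
\end{equation*}
Invoking the individual-rationality bounds $U_{B_k}\ge v(\{B_k\})=-P_d$ on the same efficiency identity gives $\sum_i U_{R_i}\le v(\mathbb N)+MP_d$, and combining the two inequalities produces the necessary numerical condition
\begin{equation*}
\sum_{k=1}^{M} v(S_k)\;\le\;M\,v(\mathbb N)+M(M-1)P_d.
\end{equation*}

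The final and hardest step is to show that this condition is violated by the cooperative-transmission characteristic function whenever $M>1$. This is the ``Bertrand/antitrust'' content of the theorem: each boundary has a hard per-slot power cap $P_{\max}$, and the per-source SNR contribution in (\ref{SNR_coop}) is strictly concave in the relay power, so splitting a fixed budget across $M$ competing sources in the grand coalition yields strictly less aggregate power saving than the $M$ separate deployments of the same budget to a single source that appear in $\sum_k v(S_k)$. I would formalize this diminishing-returns argument by bounding $v(\mathbb N)$ against $v(S_k)$ using the strict concavity of the amplify-and-forward SNR in the relay power and then verifying that the concavity gap exceeds the slack $M(M-1)P_d$ granted by individual rationality. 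The main obstacle is precisely this concavity estimate, since $v$ depends nonlinearly on the relay powers $\{P_i\}$ and on the reward ratios $\{\alpha_{k,i}\}$; extracting a clean lower bound on the shortfall is the key technical work needed to close the contradiction and conclude that the core must be empty whenever more than one backbone is present.
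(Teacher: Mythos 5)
Your argument is incomplete at exactly the point where all the difficulty lies, and the route you have chosen cannot be completed. The necessary condition you derive, $\sum_{k=1}^{M} v(S_k) \le M\,v(\mathbb{N}) + M(M-1)P_d$, is provably impossible to violate: superadditivity of the characteristic function (part of its definition in this paper) gives, for each $k$, $v(\mathbb{N}) \ge v(S_k) + \sum_{j\ne k} v(\{B_j\}) = v(S_k) - (M-1)P_d$, and summing over $k$ yields precisely $\sum_{k} v(S_k) \le M\,v(\mathbb{N}) + M(M-1)P_d$. So the inequality you hope to contradict is an automatic consequence of superadditivity, holding for every such game whether its core is empty or not; no concavity estimate on the amplify-and-forward SNR can overturn it. The structural reason is that your family $\{S_k\}_{k=1}^{M}$ is not balanced---each relay appears in all $M$ sets while each backbone appears in only one---so you are forced to patch the accounting with the individual-rationality bounds $U_{B_k} \ge -P_d$, and since the stand-alone value of a backbone node is the large negative number $-P_d$, the resulting slack $M(M-1)P_d$ is exactly big enough to absorb any conceivable excess of $\sum_k v(S_k)$ over $M\,v(\mathbb{N})$.

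The paper's proof (which is itself quite informal) is of a different character: it does not attempt a balancedness violation at all. It argues that cooperative transmission creates no value \emph{between} backbone nodes, so the worth of the grand coalition is already realized by a partition into $M$ single-backbone coalitions, and then that competition among the backbone nodes for the boundary nodes gives some coalition of the form $\{B_m\}\cup T$, with $T$ the set of relays that backbone $m$ can attract by offering slightly better $\alpha_i^m$, a profitable deviation from any proposed grand-coalition imputation. If you want to make the theorem rigorous, those two-sided ``poaching'' coalitions are the blocking sets to exhibit; the emptiness comes from this competitive instability, not from the one-backbone-plus-all-relays coalitions you chose.
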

\begin{proof}
Suppose there are $M$ backbone nodes. There is no mutual benefit
between these backbone nodes to form coalitions using cooperative
transmission. The boundary nodes join the coalitions that give
them the highest payoff, i.e., they seek coalitions that allow
them to relay the fewest packets before a reward for packet
forwarding is given. So the grand coalition is divided into $M$
coalitions, since there are benefits for a subset of nodes to form
the new coalition instead of joining the grand coalition. As a
result, the core is surely empty when $M\geq 2$.
\end{proof}

Since the grand coalition does not exist and the boundary nodes
can select the backbone nodes with which to form coalitions, it is
to the backbones nodes' benefits to adjust the packet-forwarding
policy to attract more boundary nodes to reduce the transmitted
power. So competition among the backbone nodes is introduced. Here
we denote $\alpha _ i ^m$ as the $m^{th}$ backbone node's
packet-forwarding policy for the $i^{th}$ boundary node, and
denote the nodes' coalition partition matrix as
\begin{equation}
\textbf A_{im}=\left \{
\begin{array}{ll}
1, & \mbox{ if boundary node }i\mbox{ joins coalition with
backbone node }
m,\\
0, & \mbox{ otherwise}.
\end{array}
\right.
\end{equation}
Boundary node $i$ selects the smallest $\alpha_i^m$ and joins the
corresponding coalition with backbone node $m$. The optimal
utility is obtained when the packets are relayed only by this
backbone node. So we have
\begin{equation}
\sum _{m=1}^M \textbf {A}_{im}=1, \textbf {A}_{im}\in \{0,
1\},\forall i,m.
\end{equation}

For each backbone node, the optimization is to adjust its policy
$\alpha _i ^m$ so that the overall power saving is largest. We can
write the utility for the $m^{th}$ backbone node as
\begin{equation}\label{game_backbone}
U_0^m=\max_{\alpha_i^m, \forall i} ( -   P_0^m -  \sum _{i=1}^N \alpha
_i^m \textbf A _{im} P_d),
\end{equation}
where $P_0^m$ is the reduced transmitted power using cooperative
transmission.

For each boundary node, the optimization is to select the backbone
node to join the coalition. The problem can be written as
\begin{equation}\label{game_boundary}
U_i=\max _{\textbf A _{im}, \forall m}  -  \frac {P_i}{\sum _{m=1}^M
\textbf A _{im} \alpha _i^m}
\end{equation}
\[
\mbox{s.t. } \sum _{m=1}^M \textbf {A}_{im}=1, \textbf {A}_{im}\in
\{0, 1\},\forall m.
\]

In order for a backbone node to win the coalition with a boundary
node, the backbone node has to set the lowest $\alpha_i^m$ among
all the backbone nodes. On the other hand, different boundary
nodes have different abilities to reduce different backbone nodes'
power. Using these facts, we define a new type of fairness as
follows.
\begin{definition}
{\em Market fairness} achieves the equilibrium of the two-level
games in (\ref{game_backbone}) and (\ref{game_boundary}). In this
type of fairness, no backbone node can set its policy $\alpha_i^m$
lower to get a higher utility if the other backbone nodes do not
change their policies. On the other hand, it is in the boundary
nodes' best interest to join the coalitions under this market
fairness.
\end{definition}

In Figure \ref{example_mf}, we show an example of market fairness
for a two-backbone-node and one-boundary-node example. The x-axis
and y-axis are the two backbone nodes' policies $\alpha_1^1$ and
$\alpha_1^2$, respectively. Below the $45$ degree line,
$\alpha_1^1 < \alpha _1 ^2$. As a result, the boundary node joins
the coalition with backbone node $1$, i.e., $\textbf A _{11}=1$
and $\textbf A _{12}=0$. Otherwise, the boundary node forms a
coalition with backbone node $2$, i.e., $\textbf A _{11}=0$ and
$\textbf A _{12}=1$. From (\ref{Ui}), the utility is a linear
function of the policy $\alpha _1^m,\ m=1,2$. For a backbone node,
the minimal requirement for joining the coalition is that its
power is less than the direct transmission power. In Figure
\ref{example_mf}, we show the $\tilde \alpha _1^m$ for $U_0^m= -
P_d, m=1,2$. Since different boundary nodes can help reduce the
backbone nodes' transmission power $P_0^m$ differently, different
values of $\tilde \alpha_1^m$ are required to achieve $U_0^m= -
P_d$. In our case, backbone node $1$ is the winner by providing
$\alpha_1^1=\tilde \alpha_1^2$ and its utility gain is $P_d -
P_0^1 -  \tilde \alpha _1^2 P_d$. Notice that backbone node $1$
cannot let $\alpha_1^1<\tilde \alpha_1^2$, since backbone node $2$
then has the ability to give lower $\alpha_1^2$ to attract the
boundary node. On the other hand, as long as $\alpha_1^1>\tilde
\alpha_1^2$, backbone node $1$ has no incentive to increase
$\alpha_1^1$. From this example, we can see that the backbone
nodes have to offer high enough $\alpha_i^m$ to form coalitions
with the boundary nodes, because of the competition with other
backbone nodes.

Notice that the cooperative transmission power $P_0^m$ depends on
which boundary nodes join the coalition, i.e., $P_0^m$ is a
function of vector $[\textbf A_{1m}, \dots ,\textbf A_{Nm}]$. In
order to find the point with market fairness, we formulate the
following programming method.

\begin{equation}\label{mf_linear}
\min _{P_0^m,\alpha _i^m>0, \forall i,m} \sum_{m=1}^M
\left(P_0^m(\textbf A_{1m}, \dots, \textbf A_{Nm})+\sum_{i=1}^N
\alpha_i^m \textbf A _{im} P_d\right),
\end{equation}
\[
\mbox{s.t. }\left\{
\begin{array}{l}
\forall i, \textbf A_{im}=1, \mbox{ if } \alpha
_i^m>\alpha_i^{m'}\mbox{ and } \forall
m'\neq m; \textbf A_{im}=0, \mbox{ otherwise,}\\
\forall m, P_0^m(\textbf A_{1m}, \dots \textbf
A_{Nm})+\sum_{i=1}^N \alpha_i^m \textbf A _{im} P_d \leq P_d.
\end{array}
\right.
\]
Here the first constraint is {\em Boundary Node Rationality},
which means the boundary node will select the optimal
$\alpha_i^m$. The second constraint is {\em Backbone Node
Rationality}, which states that the backbone nodes will only join
a coalition if their power can be reduced by doing so. The
objective function captures the phenomenon that the backbone nodes
will maximize their utilities by reducing $\alpha_i^m$ as much as
possible. The problem in (\ref{mf_linear}) can be efficiently
solved by algorithms such as the {\em cutting-plane} and {\em
simplex} algorithms \cite{non_optimal,optimization2}. The problem
can be solved by either the backbone node or the boundary node,
since the outcome can benefit both nodes.

\subsection{Joint Repeated-Game and Coalition-Game Packet-Forwarding Protocol}

Using the above analysis, we now develop a packet-forwarding
protocol based on repeated games and coalition games having on the
following steps.

\begin{center}{{\em Packet-Forwarding Protocol using Repeated
Games and Coalition Games}}
\end{center}
\begin{enumerate}
\item  Route discovery for all nodes.

\item  Packet-Forwarding enforcement for the backbone nodes, using
threat of future punishment in the repeated games.

\item Neighbor discovery for the boundary nodes.

\item Coalition game formation.

\item Packet relay for the backbone nodes with cooperative
  transmission.

\item Transmission of the boundary nodes' own packets to the
backbone nodes for forwarding.
 \vspace{5mm}
\end{enumerate}

First, all nodes in the network undergo route discovery. Then each
node knows who depends on it and on whom it depends for
transmission. Using this route information, the repeated games can
be formulated for the backbone nodes. The backbone nodes forward
the other nodes' information because of the threat of future
punishment if these packets are not forwarded. Due to the network
topology, some nodes' transmissions depend on the others while the
others do not depend on these nodes. These nodes are most often
located at the boundary of the network. In the next step, these
boundary nodes try to find their neighboring backbone nodes. Then,
the boundary nodes try to form coalitions with the backbone nodes,
so that the boundary nodes can be rewarded for transmitting their
own packets. Cooperative transmission gives an opportunity for the
boundary nodes to pay some ``credits" first to the backbone nodes
for the rewards of packet-forwarding in return. On the other hand,
competition among the backbone nodes prevents the boundary nodes
from being forced to accept the minimal payoffs.

It is worth mentioning the following point regarding energy
efficiency. From the overall system point of view, it is not
energy efficient for the backbone nodes to depend on the boundary
nodes for cooperation, since the boundary nodes are further away
from the backbones' destination. If a centralized control system
is enforced, it is energy efficient for the backbone nodes to
forward the packets of the boundary nodes. However, if distributed
and greedy users are considered, the curse happens. Our approach
provides the incentives for the backbone nodes to help the
boundary nodes, so that the curse is relieved. But this comes with
a cost, in the sense that the boundary nodes have to help the
backbone nodes in an energy-inefficient way. Nevertheless, this is
already much better than the totally accursed situation in which
no packet of the boundary nodes can be transmitted.

Another implementation concern arises from node mobility. The
proposed algorithm is similar to a contract. As long as the
boundary nodes help the transmission and the backbone nodes help
the packet forwarding, the contract is fulfilled. This can happen
with transmission of $1+\frac 1 \alpha$ packets. If mobility is
considered, the new contract needs to be calculated to accounts
for the new positions of the nodes. As long as the speed of
fulfilling the contract is relatively larger than the speed of
channel changing, the proposed scheme can be implemented without
major modification. Nevertheless, if the channel changes too
rapidly, some stochastic models can be used to estimate the
expected payoff (utility). Then the rest of the analysis can be
applied in a very similar way.



\section{Simulation Results}\label{sec:simulation}

We model all channels as additive white Gaussian noise channels
having the exponent of propagation loss as $3$; that is, power
falls off spatially according to an inverse-cubic law. The maximal
transmitted power is $10$dbm and the noise level is
 -  $60$dbm. The minimal SNR $\gamma$ is 10dB. In the first setup, we
assume the backbone node is located at $(0m,0m)$, and the
destination is located at either $(100m,0m)$ or $(50m,0m)$. The
boundary nodes are located on an arc with angles randomly
distributed from $0.5\pi$ to $1.5\pi$ and with distances varying
from $5$m to $100$m.

In Figure \ref{a_R}, we study the min-max fairness and show the
average $\alpha_i$ over $1000$ iterations as a function of
distance from the relays to the source node. Due to the min-max
nature, all boundary nodes have the same $\alpha_i$. When the
distance is small, i.e., when the relays are located close to the
source, $\alpha_i$ approaches $\frac 1 N$. This is because the
relays can serve as a virtual antenna for the source, and the
source needs very low power for transmission to the relays. When
the distance is large, the relays are less effective and
$\alpha_i$ decreases, which means that the relays must transmit
more packets for the source to earn the rewards of
packet-forwarding. When the destination is located at $50$m, the
source-destination channel is better than that at 100$m$. When
$N=1$ and the source-destination distance is $50$m, the relays
close to the source have larger values of $\alpha_i$ and the
relays farther away have lower values of $\alpha_i$ than that in
the $100$m case. In Figure \ref{P0_R}, we show the corresponding
$P_0$ for the backbone node. We can see that $P_0$ increases when
the distances between the boundary nodes to the backbone node
increase.

If we consider the multiple backbone (multiple core) case  with
min-max fairness, Figure \ref{a_R} and Figure \ref{P0_R} provide
the boundary nodes a guideline for selecting a backbone node with
which to form a coalition. First, a less crowded coalition is
preferred. Second, the nearest backbone node is preferred. Third,
for  $N=1$, if the source-destination channel is good, the closer
backbone node is preferred; otherwise, the farther one can provide
larger $\alpha_i$.


Next, we investigate the average fairness using the Shapley
function. The simulation setup is as follows. The backbone node is
located at $(0m,0m)$ and the destination is located at
$( -  50m,0m)$. Boundary node $1$ is located at $(20m,0m)$ or
$(50m,0m)$. Boundary node $2$ moves from $(5m,0m)$ to $(100m,0m)$.
The remaining simulation parameters are the same. In Figure
\ref{a_shapley}, we show maximal $\alpha_i$ for two boundary
nodes. We can see that when boundary node $2$ is closer to the
backbone node than boundary node $1$, $\alpha_2>\alpha_1$, i.e.,
boundary node $2$ can help relay fewer packets for backbone node
$1$ before being rewarded. The two curves for $\alpha_1$ and
$\alpha_2$ for the same boundary node $1$ location cross at the
boundary node $1$ location. The figure shows that the average
fairness using the Shapley function gives greater rewards to the
boundary node whose channel is better and who can help the
backbone node more. When boundary node $2$ moves from $(20m,0m)$
to $(50m,0m)$, $\alpha_1$ becomes smaller, but $\alpha_2$ becomes
larger. This is because the backbone node must depend on boundary
node $2$ more for relaying. However, the backbone node will pay
less for the boundary nodes. Notice that $\alpha_i$ at the
crossover point is lower. This is because the overall power for
the backbone node is high when boundary node $2$ is far away, as
shown in Figure \ref{P0_R}.

Further, we study market fairness with the following setup.
Backbone node $1$ and backbone node $2$ are located at $(0m, -  30m)$
and $(0m,30m)$, respectively. The destination is located at
$( -  50m,0m)$ and boundary node $1$ is located at $(44m,10m)$.
Boundary node $2$ moves from $(44m, -  50m)$ to $(44m,50m)$. The
remaining simulation parameters are the same as before. In Figure
\ref{a_market} and Figure \ref{P0_market}, we show $\alpha_i^m$
and $P_0^m$, respectively, under six different scenarios:
\begin{enumerate}

\item (2,1): Coalition of boundary node $2$ with backbone node
$1$;

\item (2,2): Coalition of boundary node $2$ with backbone node
$2$;

\item (1,1): Coalition of boundary node $1$ with backbone node
$1$;

\item (1,2): Coalition of boundary node $1$ with backbone node
$2$;

\item ([1;2],1): Coalition of both boundary nodes with backbone
node $1$;

\item ([1;2],2): Coalition of both boundary nodes with backbone
node $2$.

\end{enumerate}
Since boundary node $1$ is not moving, coalition (1,1) and
coalition (1,2) are horizontal lines. From the curves in Figure
\ref{a_market}, we can see that a boundary node prefers to form a
coalition with the closest backbone node, and vice versa. However,
due to competition from the other nodes, the coalition formation
is affected by combinations of many factors which are analyzed as
follows.

When boundary node $2$ moves, there are seven possible scenarios
for forming different coalitions in Table \ref{market_table}. From
Figure \ref{a_market} and Table \ref{market_table}, we can see
that rational boundary node $i$ selects the largest $\alpha_i^m$
and joins the corresponding coalition with backbone node $m$.
Sometime, it is to both boundary nodes' benefit to form a
coalition with one of the backbone nodes as in case II and case
VI. However, because the backbone nodes are greedy, the boundary
nodes can obtain only slightly better rewards than the opponent's
offer. For example, in case I, boundary node $1$ prefers backbone
node $2$. But as long as the backbone node gives an offer better
than $\alpha_1^1$, boundary node $1$ must accept the offer. On the
other hand, from Figure \ref{P0_market}, the backbone nodes want
to form coalitions with both boundary nodes so as to have the
minimal transmitted power. But because of competition from other
backbone nodes and rationality of the boundary nodes, the backbone
nodes must form a coalition with only one boundary node or
sometime not at all. The above facts demonstrate the reason why
the proposed market fairness can effectively counteract the
greediness of the backbone nodes.

%
%
%
%
%
%
%
%
%

Next, we set up a linear network with $50$ nodes spread evenly
along a line. The distance between nodes is $100$m. The users are
indexed as user $1$ to user $50$ from one end to the other. Each
user transmits to any other user with equal probability. In Figure
\ref{linear_setup}, we show the probability that a node can be a
boundary node as a function of the user index. We show two cases
with one destination and five destinations for each user,
respectively. We can see that the nodes in the middle of the
network have lower probabilities to become boundary nodes, as one
would expect. As a result, $\alpha$ for those nodes is large,
which means those nodes take less on average to help the others'
transmission, because of their locations. For the five destination
case, each user transmits to five different destination nodes, and
as a result depends more on the other nodes. So the nodes in the
middle have much lower probability to be boundary nodes than in
the one-destination case.

Finally, we examine the degree to which the coalition game can
improve the network connectivity. Here we define the network
connectivity as the probability that a randomly located node can
connect to the other nodes. All nodes are randomly located within
a square of size $B\times B$. In Figure \ref{prob_size}, we show
the network un-connectivity as a function of $B$ for the numbers
of nodes equal to $100$ and $500$. With increasing network size,
the node density becomes lower, and more and more nodes are
located at the boundary and must depend on the others for
packet-forwarding. If no coalition game is formed, these boundary
nodes cannot transmit their packets due to the selfishness of the
other nodes. With the coalition game, the network connectivity can
be improved by about 50\%. The only chance that a node cannot
connect to the other nodes is when this node is located too far
away from any other node. Thus, we can seen by this example that
the coalition game can cure the curse of the boundary nodes in
wireless packet-forwarding networks with selfish nodes.

\section{Conclusions}\label{sec:conclusion}

In this paper, we have proposed a coalition game approach to
provide incentives to selfish nodes in wireless packet-forwarding
networks using cooperative transmission, so that the boundary
nodes can transmit their packets effectively. We have used the
concepts of coalition games to maintain stable and fair game
coalitions. Specifically, we have studied three fairness concepts:
min-max fairness, average fairness, and market fairness. The
market fairness uses competition among nodes to effectively
counteract the greediness of the backbone nodes. A protocol has
been constructed using repeated games and coalition games. From
simulation results, we have seen how boundary nodes and backbone
nodes form coalitions according to different fairness criteria. We
have also seen by example that network connectivity can be
improved by about 50\%, compared to the pure repeated game
approach.

\bibliographystyle{IEEE}

\begin{thebibliography}{1}\setlength{\baselineskip}{12pt}

\bibitem{Game_theory1} G. Owen,
\emph{Game Theory}, 3rd ed. Academic Press, Burlington, MA, 2001.

\bibitem{Game_theory2} R. B. Myerson,
\emph{Game Theory: Analysis of Conflict}, Harvard University
Press, Cambridge, MA, 1991.


\bibitem{Crowfort_Gibbens_Kelly_Ostring02} J. Crowcroft, R.
Gibbens, F. Kelly and S. Ostring, ``Modelling incentives for
collaboration in mobile ad hoc networks," {\em Performance
Evaluation}, vol. 57, no. 4, pp. 427 -  439, August 2004.

\bibitem{Zhong_Chen_Yang03} S. Zhong, J. Chen and Y. R.
Yang, ``Sprite: A simple, cheat-proof, credit-based system for
mobile ad-hoc networks," in {\em Proceedings of the Annual IEEE
Conference on Computer Communications, INFOCOM}, pp. 1987 -  1997,
San Francisco, CA, March 2003.

\bibitem{Marti_Giuli_Kai_Baker00} S. Marti, T. J. Giuli, K. Lai and M.
Baker, ``Mitigating routing misbehaviour in mobile ad hoc
networks," in {\em Proceedings of the ACM/IEEE Annual International
Conference on Mobile Computing and Networking (Mobicom)},
pp. 255 -  265, Boston, MA, August 2000.

\bibitem{Buchegger_LeBoudec02} S. Buchegger and J -  Y. Le Boudec,
``Performance analysis of the CONFIDANT protocol (cooperation of
nodes -  fairness in dynamic ad-hoc networks)," in {\em
Proceedings of the ACM International Symposium on Mobile Ad Hoc
Networking and Computing (MobiHoc)}, pp. 80 -  91, Lausannae,
Switzerland, June 2002.

\bibitem{Michiardi_Molva03} P. Michiardi and R. Molva,
``A game theoretical approach to evaluate cooperation enforcement
mechanisms in mobile ad hoc networks," in {\em Proceedings of the
IEEE/ACM International Symposium on Modeling and Optimization in
Mobile, Ad Hoc, and Wireless Networks (WiOpt)}, Sophi Antipolis,
France, March 2003.


\bibitem{Altman_Kherani_Michiardi_Molva05} E. Altman, A. A. Kherani, P. Michiardi and R. Molva,
``Non-cooperative forwarding in ad hoc networks," in {\em
Proceedings of the {I}nternational {C}onferences on {N}etworking},
{v}ol.3462, {W}aterloo, {C}anada, May 2005.

\bibitem{Hubaux} M. Felegyhazi, J. P. Hubaux, and L. Buttyan,
``Nash equilibria of packet forwarding strategies in wireless ad
hoc networks," {\em IEEE Transactions on Mobile Computing} , vol.
5, no.  5, p.p. 463 -  476, April 2006.

\bibitem{infocom03} V. Srinivasan, P. Nuggehalli, C. F. Chiasserini, and R. R.
Rao, ``Cooperation in wireless ad hoc networks," in \emph{
Proceedings of the Annual IEEE Conference on Computer Communications
(INFOCOM)}, San Francisco, CA, March 2003.

\bibitem{hanzhu2}Z. Han, Z. Ji, and K. J. R. Liu,
``Dynamic distributed rate control for wireless networks by
optimal cartel maintenance strategy," in {\em Proceedings of the  the IEEE
Global Telecommunications Conference}, pp. 3742 -  3747, Dallas, TX,
November 2004.

\bibitem{hanzhu1}Z. Han, C. Pandana, and K. J. R. Liu,
``A self-learning repeated game framework for optimizing packet
forwarding networks," in {\em Proceedings of the IEEE Wireless
Communications and Networking Conference}, pp. 2131 -  2136, New
Orleans, LA, March 2005.




\bibitem{Meshkati1} F. Meshkati, H. V. Poor, S. C. Schwartz and N. B. Mandayam, ``An
energy-efficient approach to power control and receiver design in
wireless data networks," {\em IEEE Transactions on
Communications}, vol.  53, no.  11, pp.  1885  -   1894, November
2005.


\bibitem{Meshkati2}F. Meshkati, M. Chiang, H. V. Poor and S. C. Schwartz, ``A
game-theoretic approach to energy -  efficient power control in
multi-carrier CDMA systems," {\em IEEE Journal on Selected Areas
in Communications  -   Special Issue on Advances in Multicarrier
CDMA}, vol.  24, no.  6, pp.  1115  -   1129, June 2006.


\bibitem{Meshkati3} F. Meshkati, D. Guo, H. V. Poor and S. C. Schwartz, ``A unified
approach to energy-efficient power control in large CDMA systems,"
{\em IEEE Transactions on Wireless Communications}, vol.7, no.4,
p.p.1208 - 1216, April, 2008.


\bibitem{Meshkati4} F. Meshkati, H. V. Poor, S. C. Schwartz and R. Balan,
``Energy-efficient resource allocation in wireless networks with
quality-of-service constraints," {\em IEEE Transactions on
Communications},  to appear.

\bibitem{bib:Aazhang1}
A. Sendonaris, E. Erkip and B. Aazhang, ``User cooperation
diversity, Part I: System description,'' {\it IEEE Transactions on
Communications}, vol. 51, no. 11, pp. 1927 -  1938, November 2003.

\bibitem{bib:Laneman2}
J. N. Laneman, D. N. C. Tse and G. W. Wornell, ``Cooperative
diversity in wireless networks: Efficient protocols and outage
behavior,'' {\it IEEE Transactions on Information Theory}, vol.
50, no. 12, pp. 3062 -  3080, December 2004.

\bibitem{Yates2} I. Maric and R. D. Yates,
``Cooperative multihop broadcast for wireless networks," {\em IEEE
Journal on Selected Areas in Communications}, vol. 22, no. 6,
pp. 1080 -  1088, August 2004.

\bibitem{Luo} J. Luo, R. S. Blum, L. J. Greenstein, L. J. Cimini, and A. M. Haimovich,
``New approaches for cooperative use of multiple antennas in ad
hoc wireless networks," in {\em Proceedings of the IEEE Vehicular
Technology Conference}, vol. 4, pp. 2769 -   2773, Los Angeles, CA,
September 2004.

\bibitem{Bletsas} A. Bletsas, A. Lippman, and D. P. Reed,
``A simple distributed method for relay selection in cooperative
diversity wireless networks, based on reciprocity and channel
measurements," in {\em Proceedings of the IEEE Vehicular Technology
Conference}, vol. 3, pp. 1484 -  1488, Stockholm, Sweden, May 2005.

\bibitem{bib:zhuAhmed} A. K. Sadek, Z. Han, and K. J. R. Liu,
``An efficient cooperation protocol to extend coverage area in
cellular networks," in {\em Proceedings of the IEEE Wireless
Communications and Networking Conference}, vol. 3, pp. 1687 -  1692,
Las Vegas, NV, April 2006.

\bibitem{bib:zhuwhohelpswhom} Z. Han, T. Himsoon, W. Siriwongpairat,
and K. J. R. Liu, ``Energy efficient cooperative transmission over
multiuser OFDM networks: who helps whom and how to cooperate," in
{\em Proceedings of the IEEE Wireless Communications and Networking
Conference}, vol. 2, pp. 1030 -  1035, New Orleans, LA, March 2005.

\bibitem{globecom_zhu} B. Wang, Z. Han, and K. J. Ray Liu, ``Stackelberg game for
distributed resource allocation over multiuser cooperative
communication networks," in {\em Proceedings of the IEEE Global Telecommunications
Conference}, San Francisco, CA, November 2006.

\bibitem{hanzhu_CB}Z. Han and H. V. Poor, ``Lifetime improvement of wireless
sensor networks by collaborative beamforming and cooperative
transmission," in {\em Proceedings of the  IEEE International
Conference on Communications}, Glasgow, Scotland, June 2007.

\bibitem{bib:Adve}
Y. Zhao, R. S. Adve, and T. J. Lim, ``Improving
amplify-and-forward relay networks: Optimal power allocation
versus selection," in \emph{Proceedings of the IEEE International
Symposium on Information Theory}, Seattle, WA, July 2006.

\bibitem{bib:Madsen}
Z. Yang, J. Liu, and A. Host-Madsen, ``Cooperative routing and
power allocation in ad-hoc networks,'' in {\it Proceedings of the
IEEE Global Telecommunications Conference}, Dallas, TX, November
2005.

\bibitem{non_optimal}
S. Boyd and L. Vandenberghe, {\em Convex Optimization}, Cambridge
University Press, Cambridge, UK, 2006. (http://www.stanford.edu/ \~{}
boyd/cvxbook.html)


\bibitem{optimization2}Z. Han and K. J. R. Liu, {\em Resource Allocation for Wireless
Networks: Basics, Techniques, and Applications}, {Cambridge
University Press}, Cambridge, UK, 2008.

\end{thebibliography}

\newpage

\begin{figure}
    \centering
    \includegraphics[width=4in]{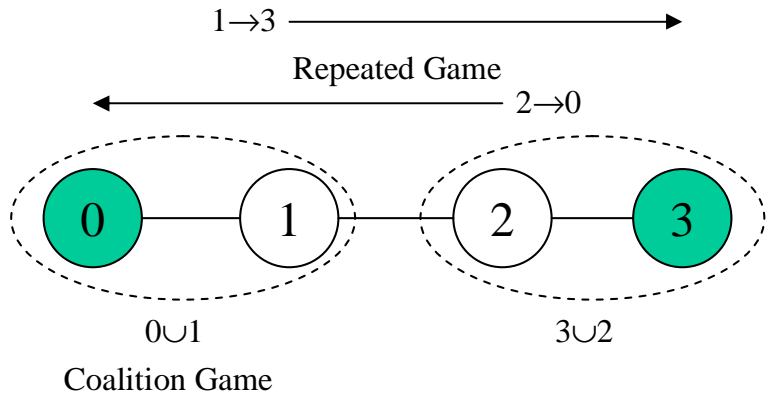}\vspace{5mm}
    \caption{\footnotesize{Illustration Example of the Curse of Boundary Nodes}}
    \label{example}
\end{figure}

\begin{figure}
    \centering
    \includegraphics[width=4in]{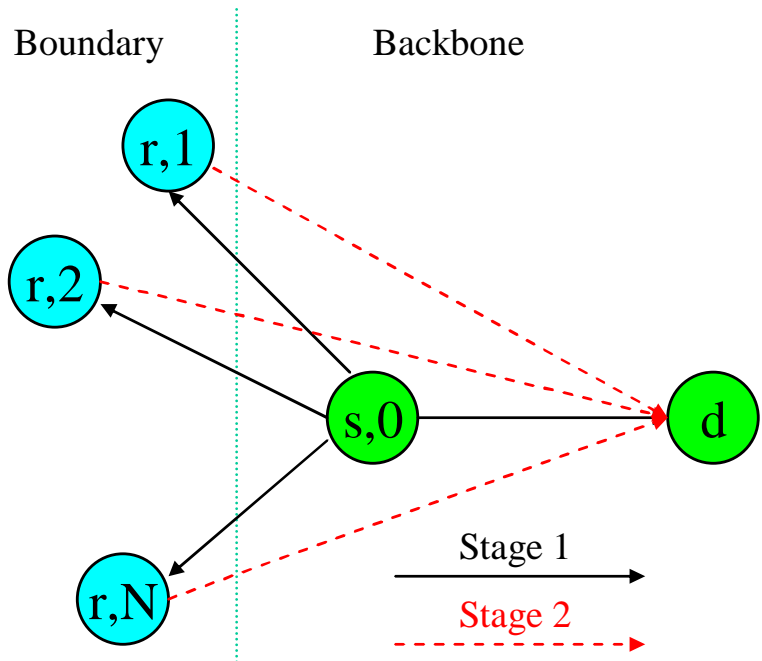}
    \caption{\footnotesize{Coalition Game with Cooperative Transmission}}
    \label{example_ct}
\end{figure}

\begin{figure}
    \centering
    \includegraphics[width=4in]{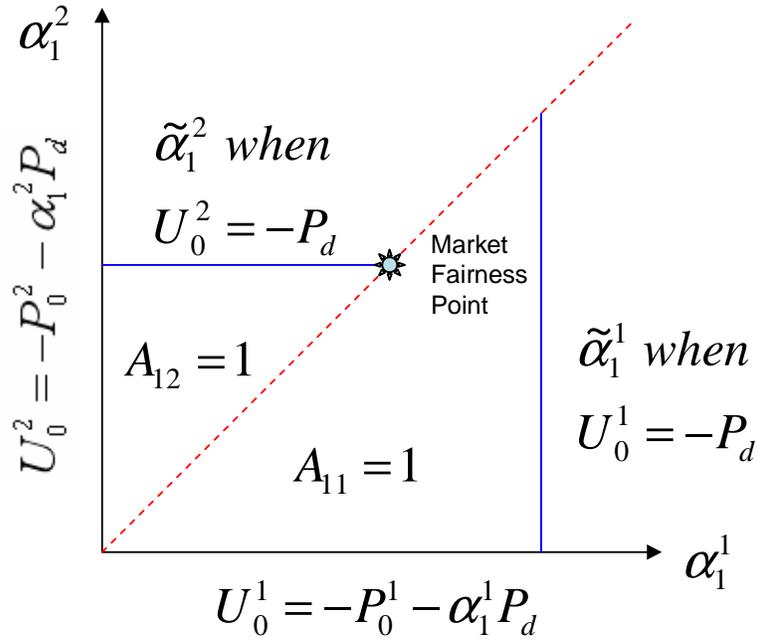}
    \caption{\footnotesize{Market Fairness Point for a Two-Backbone-Node One-Boundary-Node Example}}
    \label{example_mf}
\end{figure}

\begin{figure}
    \centering
    \includegraphics[width=4in]{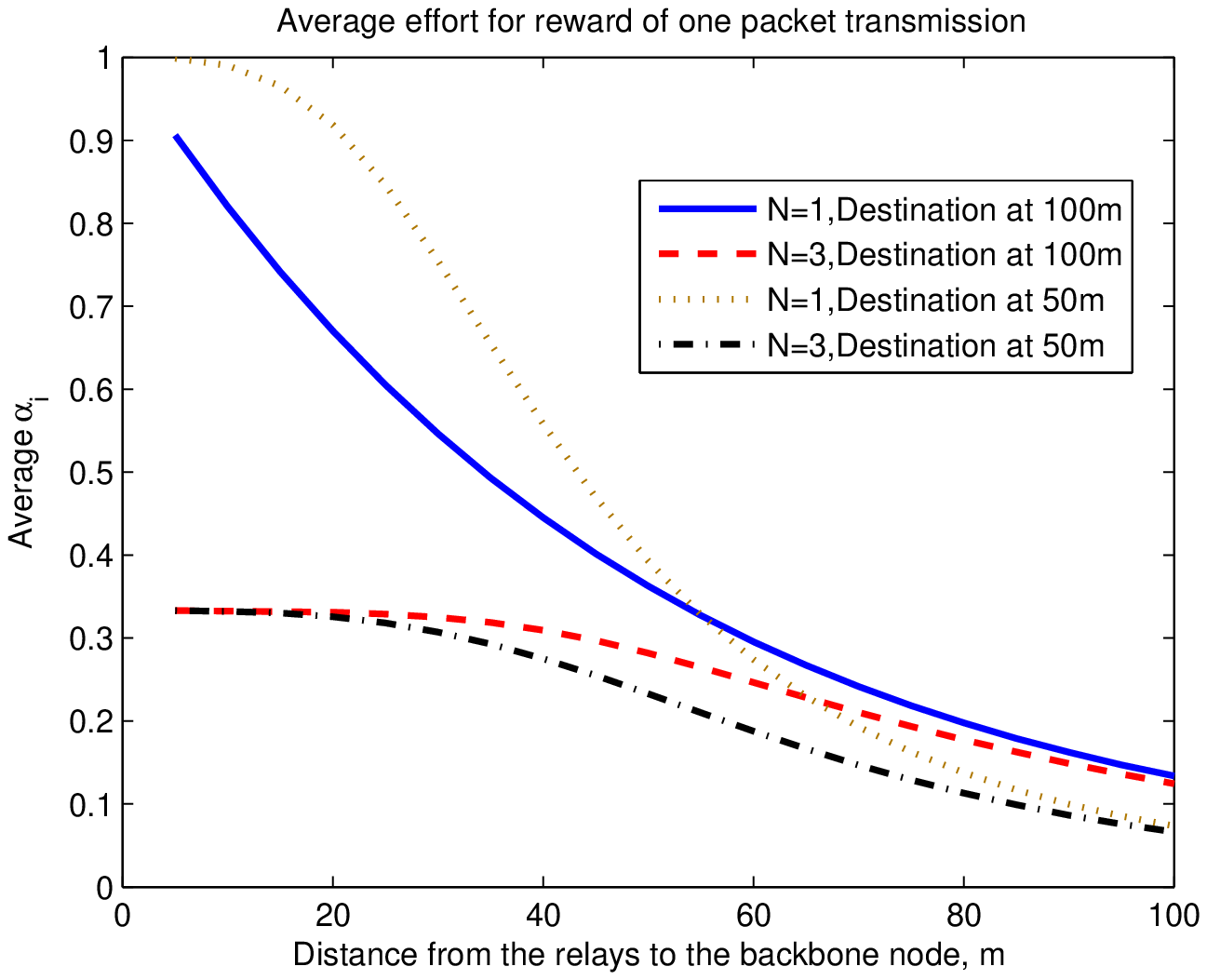}
    \caption{\footnotesize{$\alpha$ for Different Channels and No. of Nodes, Min-Max Fairness}}
    \label{a_R}
\end{figure}

\begin{figure}
    \centering
    \includegraphics[width=4in]{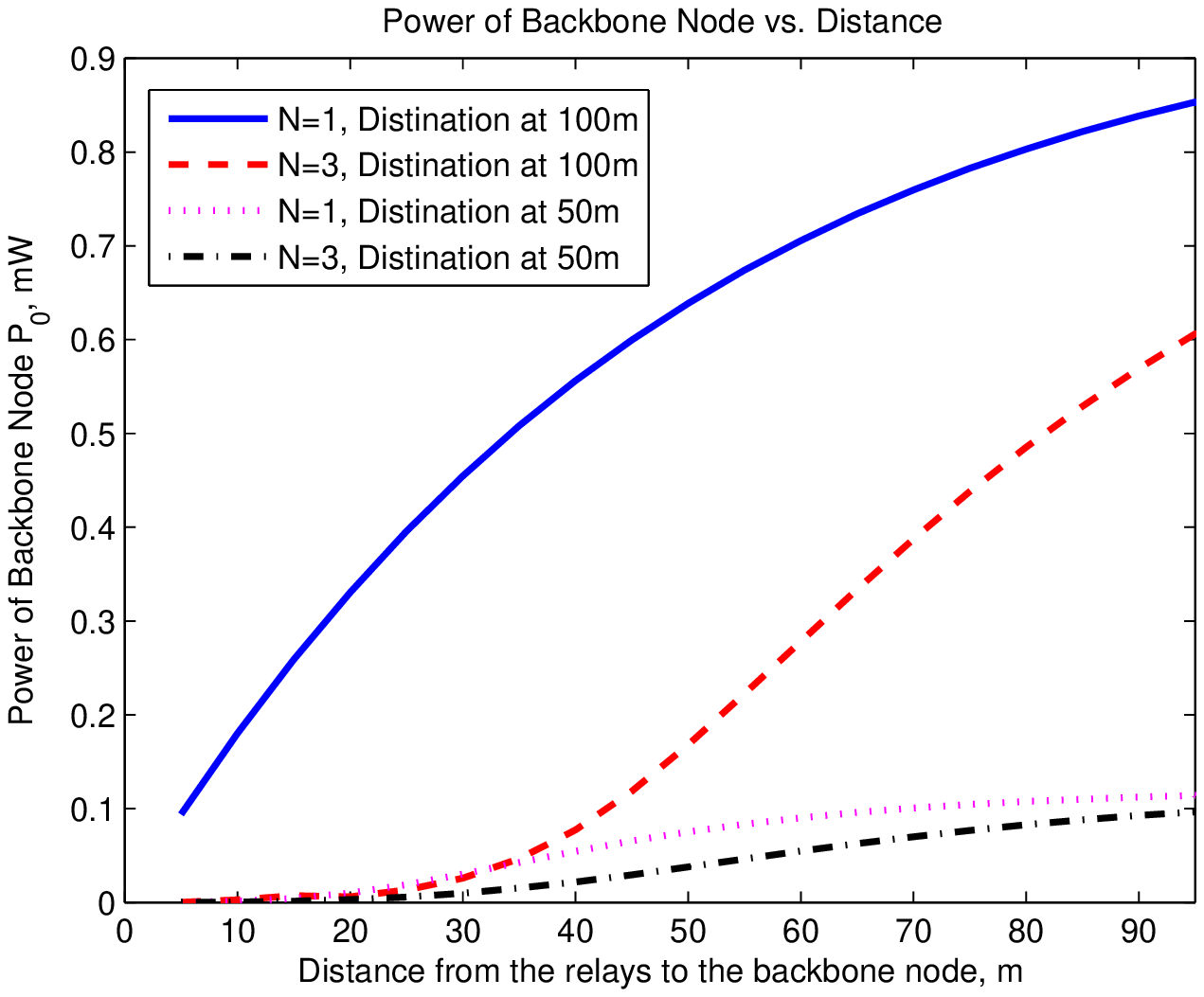}
    \caption{\footnotesize{$P_0$ for Different Channels and No. of Nodes, Min-Max Fairness}}
    \label{P0_R}
\end{figure}

\begin{figure}
    \centering
    \includegraphics[width=4in]{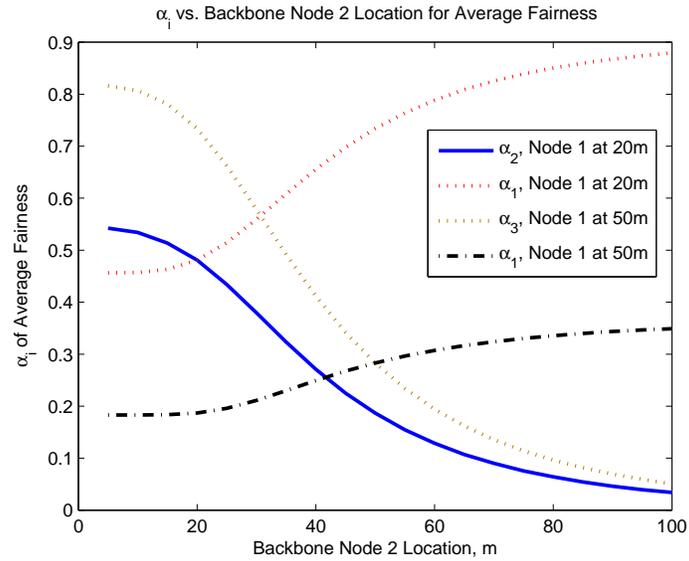}
    \caption{\footnotesize{$\alpha_i$ of Average Fairness for Different Users' Locations}}
    \label{a_shapley}
\end{figure}

\begin{figure}
    \centering
    \includegraphics[width=4in]{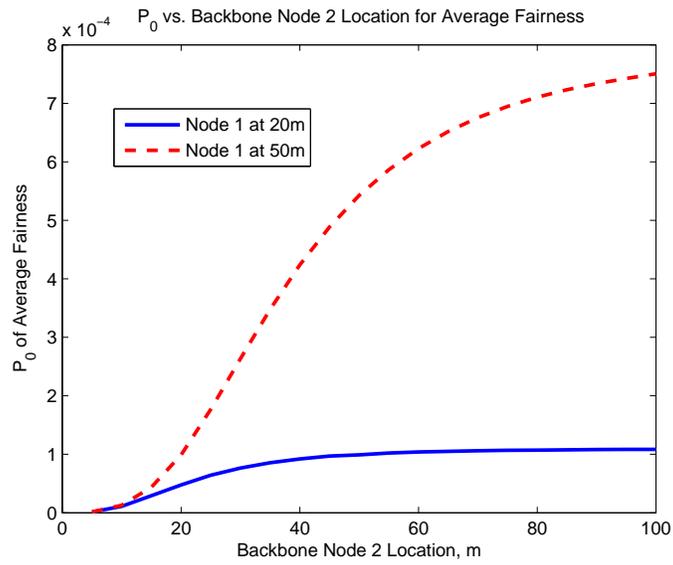}
    \caption{\footnotesize{$P_0$ of Average Fairness for Different Users' Locations}}
    \label{P0_shapley}
\end{figure}

\begin{figure}
    \centering
    \includegraphics[width=4in]{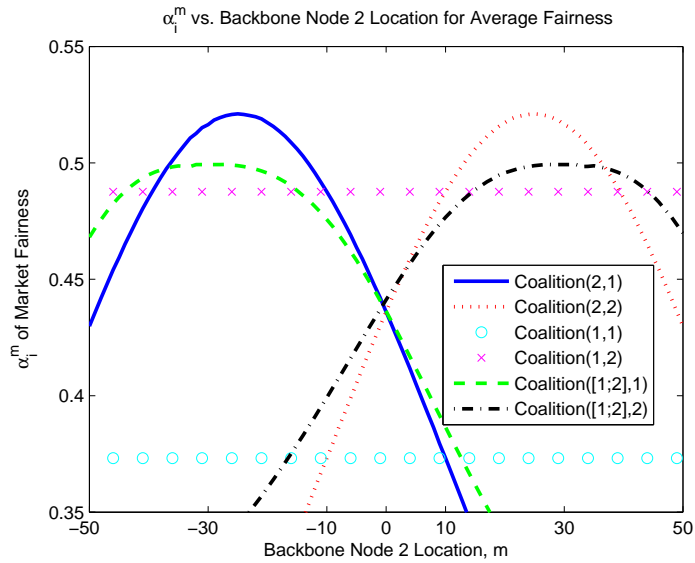}
    \caption{\footnotesize{$\alpha_i^m$ of Market Fairness for Different Users' Locations}}
    \label{a_market}
\end{figure}

\begin{figure}
    \centering
    \includegraphics[width=4in]{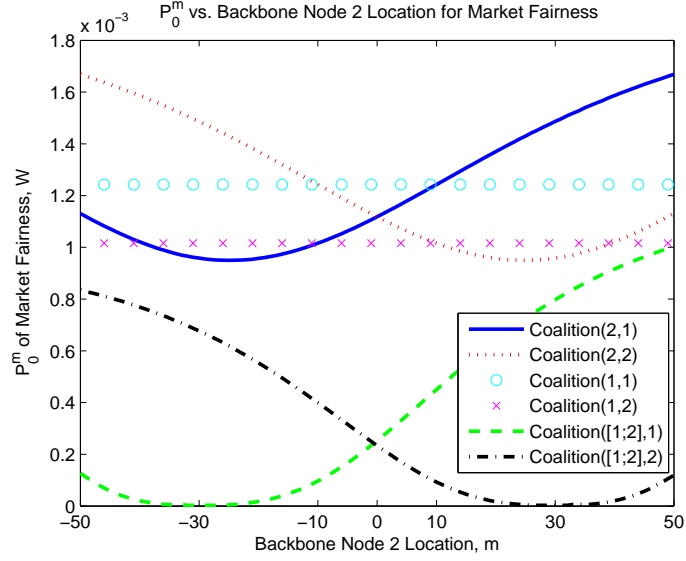}
    \caption{\footnotesize{$P_0^m$ of Market Fairness for Different Users' Locations}}
    \label{P0_market}
\end{figure}

\begin{table}
\caption{Market Fairness Coalitions}\vspace{3mm}
\begin{center}
\begin{tabular}{|c|c|c|}
  \hline
  Case & Coalition (optimal for boundary node) & Minimal $\alpha$ offered by backbone node\\
  \hline
  I & (1,2),(2,1) & $\alpha_1^1,\alpha_2^2$ \\
  \hline
  II & ([1;2],1) &  $\alpha_1^2,\alpha_2^2$ \\
  \hline
  III & (2,1),(1,2) & $\alpha_1^1,\alpha_2^2$ \\
  \hline
  IV & (1,2),(2,1) & $\alpha_1^1,\alpha_2^2$ \\
  \hline
  V & (2,2),(1,1) & $\alpha_1^2,\alpha_2^1$ \\
  \hline
  VI & ([1;2],2) &  $\alpha_1^1,\alpha_2^1$ \\
  \hline
  VII & (1,2),(2,1) & $\alpha_1^1,\alpha_2^2$ \\
    \hline
\end{tabular}\label{market_table}
\end{center}
\end{table}

\begin{figure}
    \centering
    \includegraphics[width=4in]{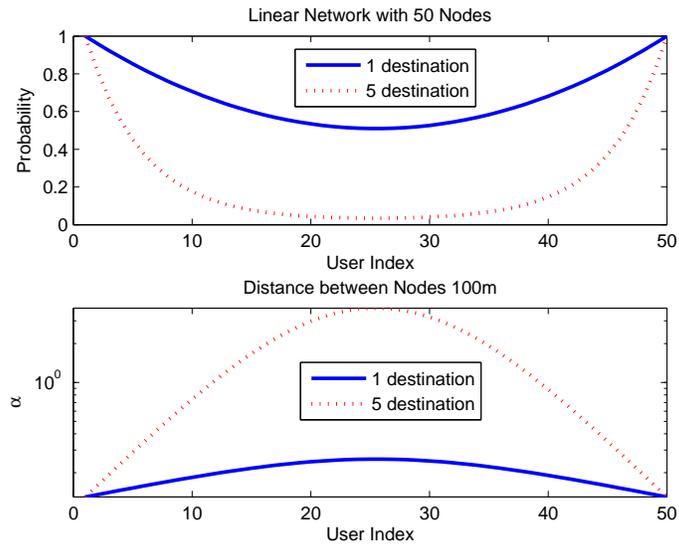}
    \caption{\footnotesize{Linear Network Setup}}
    \label{linear_setup}
\end{figure}

\begin{figure}
    \centering
    \includegraphics[width=4in]{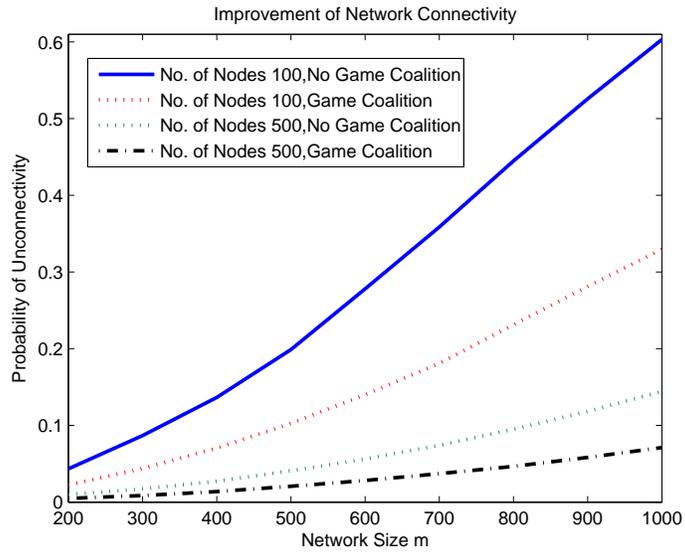}
    \caption{\footnotesize{Network Connectivity vs. Network Size}}
    \label{prob_size}
\end{figure}

\end{document}